\documentclass[aps,twocolumn,showpacs,prl,superscriptaddress,floatfix,longbibliography]{revtex4-1}

\usepackage{amsthm}
\usepackage{amsmath,amssymb,amsmath,amsthm,bm,graphicx,xcolor}
\usepackage{natbib}
\usepackage{color}
\usepackage{tabularx}
\usepackage{epsfig}
\usepackage{amssymb}
\usepackage{graphicx}
\usepackage{wasysym}
\usepackage{dcolumn}
\usepackage{epstopdf}
\usepackage{subfigure}
\usepackage{psfrag}
\usepackage{bbold}

\newtheorem{thm}{Theorem}
\newtheorem{lem}[thm]{Lemma}


\newcommand{\R}{{\mathbb{R}}}
\DeclareMathOperator{\Tr}{Tr}
\DeclareMathOperator{\prob}{Prob}
\DeclareMathOperator{\mean}{\mathbb{E}}

\newcommand{\bra}[1]{\mbox{$\langle #1 |$}}

\newcommand{\ket}[1]{\mbox{$| #1 \rangle$}}

\newcommand{\ketbra}[2]{\left|#1\middle\rangle\!\middle\langle#2\right|}
\newcommand{\norm}[1]{\left\|#1\right\|}


\begin{document}

\title{Reconstructing quantum states from single-party information}

\author{Christian Schilling}
\email{christian.schilling@physics.ox.ac.uk}
\affiliation{Clarendon Laboratory, University of Oxford, Parks Road, Oxford OX1 3PU, United Kingdom}

\author{Carlos L. Benavides-Riveros}
\affiliation{Institut f\"ur Physik, Martin-Luther-Universit\"at Halle-Wittenberg, 06120 Halle, Germany}

\author{P\'eter Vrana}
\affiliation{Department of Geometry, Budapest University of Technology and
Economics, Budapest, Hungary}
\affiliation{QMATH, Department of Mathematical Sciences, University of
Copenhagen, Universitetsparken 5, 2100 Copenhagen, Denmark}

\date{\today}

\begin{abstract}
The possible compatibility of density matrices for single-party subsystems is described by linear constraints on their respective spectra. Whenever some of those quantum marginal constraints are saturated, the total quantum state has a specific, simplified structure. We prove that these remarkable global implications of extremal local information are stable, i.e.~they hold approximately for spectra close to the boundary of the allowed region. Application of this general result to fermionic quantum systems allows us to characterize natural extensions of the Hartree-Fock ansatz and to quantify their accuracy by resorting to one-particle information, only: The fraction of the correlation energy not recovered by such an ansatz can be estimated from above by a simple geometric quantity in the occupation number picture.
\end{abstract}

\pacs{03.67.a, 03.65.Ta, 03.65.Ud}


\maketitle

\paragraph{Introduction.---}
The exact treatment of interacting many-body quantum systems is quite challenging.
This has its origin in the exponential scaling of the total system's Hilbert space dimension as the
system size increases. Consequently, further physical structure needs to be exploited to allow for a
computationally affordable, more efficient (approximate) description: Physical systems typically exhibit only
one- and two-particle interactions. The relevant physical properties are therefore strongly related
to the one- and two-particle picture. In particular the calculation of the ground state amounts to
a variational minimization involving only two-particle reduced density matrices (rather than the full $N$-particle
wave function).
Such approaches exploiting reduced-particle pictures are the most natural and successful ones in practice. Prominent examples for the case of fermions are \emph{Density-Functional-} \cite{Hohenberg} and \emph{Reduced-Density-Matrix-Functional-Theory} \cite{Gilbert,LathRDMFT08} based on the one-particle picture and for the two-particle picture partial solutions of the two-body $N$-representability problem in the form of outer approximation to the set of valid two-particle reduced density matrices (see, e.g., Refs.~\cite{Col63,Percus64,Kummer67,Erdahl78,ErdMazz01,Nakata01,Mazz04,Mazz12,Mazz16}).

A complementary question has gained tremendous physical relevance inspired by the successful development of quantum information theory: How much information in reduced particle descriptions is required to uniquely determine the total multipartite quantum state?  A lot of effort has been spent on this very general and mathematically highly challenging
question. Among several important insights \cite{LindenPRL1,Doisi3party,3out2exp1,EntvsConc,Nirred1,UniqueGrov,Nirred2,WfromBiparty,NonlocCorr,WalterBD,StatefromMeas,CorrinQMP,3out2exp2,GSofFewBody,JoelPRL} there is one result which deserves particular attention: It has been shown \cite{LindenPRL2} that
(generic) pure multipartite quantum states are \emph{uniquely} determined given some specific reduced density matrices of a fraction of 2/3 of the
parties. In this letter we show that in case of so-called \emph{``(quasi)extremal local information''} (specified below) significant structural simplifications for the multipartite state follow already from the \emph{single}-party reduced density matrices. Application of this striking result to the case of fermionic quantum systems allows us to characterize natural extensions of the Hartree-Fock ansatz and quantify the accuracy by resorting to one-fermion information, only.

\paragraph{(Quasi)extremal local information.---}
In the following we consider a multipartite quantum system consisting of
subsystems $\mathcal{S}_1,\mathcal{S}_2,\ldots,\mathcal{S}_N$. We assume for the moment that these \emph{single}-party
systems are distinguishable and that their respective Hilbert spaces
are all finite, $d$-dimensional. We also assume that the total system is in a pure state
$\ket{\Psi} \in \mathcal{H}$ and denote the respective single-party reduced
density-matrices by $\rho_{\mathcal{S}_1},\ldots,\rho_{\mathcal{S}_N}$. The fact that those marginals
originate via partial trace from the same total state $\ket{\Psi}$ exposes strong relations
on them: In a mathematical breakthrough \cite{Kly4,Daft,Kly2,Kly3,Altun} it has been shown that
given density matrices $\rho_{\mathcal{S}_1},\ldots,\rho_{\mathcal{S}_N}$ are compatible if and only if their set $\vec{\lambda}\equiv (\vec{\lambda}_{\mathcal{S}_1},\ldots,\vec{\lambda}_{\mathcal{S}_N})$
of spectra obeys specific linear constraints (with integer coefficients). By arranging each spectrum in decreasing order those
\emph{quantum marginal constraints} define a high-dimensional polytope $\mathcal{P}\subset \mathbb{R}^{Nd}$ of
mathematically possible spectra $\vec{\lambda}$ (see Ref.~\cite{CSQMath12} for an elementary review). The form of this polytope strongly depends on the dimension of the local Hilbert spaces, the number of subsystems and on possible additional restrictions (as, e.g., symmetries of $\ket{\Psi}$).
The prime example is the one of $N$ qubits: Their \emph{single} qubit-reduced density matrices $\rho_1,\ldots,\rho_N$  are compatible
if and only if their spectra obey the inequalities \cite{Higushi}
\begin{equation}\label{eq:Hig}
D_i(\vec{\lambda})\equiv -\lambda_i^{(2)}+ \sum_{j\neq i} \lambda_j^{(2)} \geq 0\,,
\end{equation}
for all $i$. Here, $\lambda_i^{(2)}$ denotes the smaller eigenvalue of the $i$-th qubit reduced density matrix $\rho_i$, $\lambda_i^{(1)}=1-\lambda_i^{(2)}$, and we introduce
$\rho_i \equiv \lambda_i^{(1)} \ket{1}\!\bra{1}+\lambda_i^{(2)} \ket{2}\!\bra{2}$.

Particular physical relevance of the quantum marginal constraints is given whenever the total spectral vector $\vec{\lambda}$ lies on the boundary $\partial \mathcal{P}$ of the polytope $\mathcal{P}$. In case of such \emph{``extremal local information''} the corresponding physical system is not only limited in its response to external unitary perturbations but remarkable structural simplifications follow for the quantum state of the total system. To demonstrate this, saturation of a constraint $D_i$ (\ref{eq:Hig}) implies a drastic reduction to only $N$ of the $2^N$ product states $\ket{i_1,\ldots,i_N}$, $i_k=1,2$, which contribute to the total state $\ket{\Psi}$. For $i=1$, e.g., these are $\ket{1,\ldots,1}$ and $\ket{2,2,1,1,\ldots}$, $\ket{2,1,2,1,1,\ldots}$,\ldots,$\ket{2,1,\ldots,1,2}$. In general, such \emph{selection rules} implied by extremal local information can be stated in a compact form: Whenever a constraint $D\big(\vec{\lambda}_{\mathcal{S}_1},\ldots,\vec{\lambda}_{\mathcal{S}_N}\big)\geq 0$ is saturated, any compatible total state $\ket{\Psi}$ fulfills \cite{Alex,GLS}
\begin{equation}\label{eq:Dhat}
\hat{D}_{\Psi}\, \ket{\Psi} \equiv D\Big[\big(\hat{n}_{\Psi,\mathcal{S}_1}^{(i)}\big)_{i=1}^d,\ldots,\big(\hat{n}_{\Psi,\mathcal{S}_N}^{(i)}\big)_{i=1}^d\Big]\,\ket{\Psi}=0\,,
\end{equation}
where, e.g., $\hat{n}_{\Psi,\mathcal{S}_1}^{(i)}\equiv \ket{i}_{\!\mathcal{S}_1 \mathcal{S}_1}\!\bra{i}\otimes \mathbb{1}_{\mathcal{S}_2}\otimes \ldots \otimes \mathbb{1}_{\mathcal{S}_N}$ with $\ket{i}_{\mathcal{S}_1}$ the eigenvector of $\rho_{\mathcal{S}_1}$ corresponding to its $i$-th largest eigenvalue $\lambda_{\mathcal{S}_1}^{(i)}$.
Due to the linearity of the quantum marginal constraints, one has $\bra{\Psi}\hat{D}_\Psi\ket{\Psi}=D\big(\vec{\lambda}_{\mathcal{S}_1},\ldots,\vec{\lambda}_{\mathcal{S}_N}\big)$. Hence, Eq.~(\ref{eq:Dhat}) amounts to a simple selection rule for the product states built up from the local eigenstates $\ket{i_1}_{\mathcal{S}_1},\ldots,\ket{i_N}_{\mathcal{S}_N}$: The general expansion
\begin{equation}\label{eq:Psi}
\ket{\Psi} =  \sum_{(i_1,\ldots,i_N)\in \mathcal{I}_D} c_{i_1,\ldots,i_N}\,\ket{i_1,\ldots,i_N}\,,
\end{equation}
restricts to configurations $(i_1,\ldots,i_N)\in \mathcal{I}_D$, namely those fulfilling
$\hat{D}_\Psi \ket{i_1,\ldots,i_N}=0$ (see example above). This remarkable structural simplification (\ref{eq:Psi}) based on (\ref{eq:Dhat}) has a deep mathematical origin \cite{Alex}. On the other hand, however, it concerns a set of quantum states of measure zero. The crucial question from a physical viewpoint is therefore the one about the stability of conditions (\ref{eq:Dhat}) and (\ref{eq:Psi}), respectively: Does $\vec{\lambda}$ close to the polytope boundary (\emph{``quasiextremal local information''}), $D\big(\vec{\lambda}_{\mathcal{S}_1},\ldots,\vec{\lambda}_{\mathcal{S}_N}\big)\approx 0$, imply approximately that simplified structure (\ref{eq:Psi})? In the following we answer this question by `Yes'.

Due to the particular relevance of the quantum marginal constraints for fermions and due to the elegance of the second quantization we present our derivation for the setting of $N$ identical fermions with an underlying one-particle Hilbert space $\mathcal{H}_1^{(d)}$ of dimension $d$. The case of a quantum system without exchange symmetry is treated in the same fashion. Let us now consider one of the quantum marginal constraints, called \emph{generalized Pauli constraints},
\begin{equation}\label{eq:gpc}
D(\vec{\lambda}) = \kappa_0 +\sum_{j=1}^d \kappa_j \lambda^{(j)} \geq 0\,,
\end{equation}
with $\kappa_i\in\mathbb{Z}$. For a fixed state $\ket{\Psi}\in \wedge^N[\mathcal{H}_1^{(d)}]$, with one-particle reduced density matrix,
\begin{equation}\label{eq:1rdm}
\rho_1\equiv N \mbox{Tr}_{N-1}\big[\ket{\Psi}\!\bra{\Psi}\big]\equiv \sum_{j=1}^d \lambda^{(j)} \ket{j}\!\bra{j}\,,
\end{equation}
we define the corresponding $\hat{D}_{\Psi}$-operator (\ref{eq:Dhat}). By introducing the particle number operators $\hat{n}_\Psi^{(j)}$ in second quantization with respect to the eigenvectors $\ket{j}$ of $\rho_1$ it reads
\begin{equation}\label{eq:Dhatf}
\hat{D}_\Psi=\kappa_0 \mathbb{1}+\sum_{i=1}^d\kappa_j \hat{n}_\Psi^{(j)}\,.
\end{equation}
The general idea is now to define an artificial time-evolution/flow acting on $\ket{\Psi}\equiv \ket{\Psi(t=0)}$ with the effect that the respective vector $\vec{\lambda}(t)$ of $\ket{\Psi(t)}$ converges to the polytope facet $F_D$ (defined by $D=0$). If the change of $\ket{\Psi(t)}$ is not too large we can then relate $\ket{\Psi}$ to the state $\ket{\Psi_\infty}\equiv \lim_{t\rightarrow \infty}\ket{\Psi(t)}$ which has the simplified structure implied by Eq.~(\ref{eq:Dhat}).

A promising candidate for such a flow is defined by the differential equation (justified retrospectively),
\begin{equation}\label{eq:flow}
\frac{d}{dt}\ket{\Psi(t)}=-\big(\mathbb{1}-\ketbra{\Psi(t)}{\Psi(t)}\big)\,\hat{D}_{\Psi(t)}\ket{\Psi(t)}.
\end{equation}
This differential equation with the initial condition $\ket{\Psi(0)}=\ket{\Psi}$ has a unique solution as long as $\vec{\lambda}(t)$ remains nondegenerate. Note that the first factor guarantees that the $L^2$-norm $\norm{\Psi(t)}\equiv \sqrt{\bra{\Psi(t)}\Psi(t)\rangle}$ is constant. The one-particle reduced density matrix evolves as
\begin{equation}\label{eq:rhochange}
\dot{\rho}_1(t)=N \mbox{Tr}_{N-1}\big[\ket{\dot{\Psi}(t)}\!\bra{\Psi(t)}+\ket{\Psi(t)}\!\bra{\dot{\Psi}(t)}\big]\,,
\end{equation}
where the `dot' stands for $\frac{d}{dt}$. By perturbation theory, we can determine the change $\frac{d}{dt}\lambda^{(i)}(t)$ of all eigenvalues and therefore the change of $D(\vec{\lambda}(t))$, as well. An elementary but lengthy calculation yields \footnote{See the Supplemental Material at url for technical details used in the flow-based proof of the structural implications of quasiextremal local information, a brute-force proof for the setting $(N,d)=(3,6)$ including the case of degenerate occupation numbers and technical details used for relating the energy and the $N$-fermion picture.}
\begin{equation}\label{eq:Dchange}
\frac{d}{dt}D(\vec{\lambda}(t)) = -2\,\mbox{Var}_{\Psi(t)}\!\hat{D}_{\Psi(t)}\,.
\end{equation}
Eq.~(\ref{eq:Dchange}) justifies retrospectively the definition of the flow (\ref{eq:flow}). It reduces the distance $D(\vec{\lambda}(t))$ of $\vec{\lambda}(t)$ to the polytope facet $F_D$ as long as the variance $\mbox{Var}_{\Psi(t)}\!\hat{D}_{\Psi(t)} \equiv \langle\hat{D}_{\Psi(t)}^{\,2}\rangle_{\Psi(t)}-\langle\hat{D}_{\Psi(t)}\rangle_{\Psi(t)}^{\,2}$ does not vanish. Even further, since $\hat{D}_{\Psi(t)}$ has an integer-valued spectrum we can conclude that whenever $D(\vec{\lambda})$ is small but nonzero $\ket{\Psi(t)}$ has weight in more than one eigenspace of $\hat{D}_{\Psi(t)}$, i.e.~the variance cannot vanish. To be more specific, one proves \cite{Note1}
\begin{equation}\label{eq:Varest}
\mbox{Var}_{\Psi(t)}\!\hat{D}_{\Psi(t)} \geq D(\vec{\lambda}(t)) \, \big[1-D(\vec{\lambda}(t))\big]\,.
\end{equation}
which together with (\ref{eq:Dchange}) leads to (assuming $D(\vec{\lambda}(t))\leq \frac{1}{2}$)
\begin{equation}\label{eq:DchvsD}
\frac{d}{dt}D(\vec{\lambda}(t))\le - 2 D(\vec{\lambda}(t)) \, \big[1-D(\vec{\lambda}(t))\big]\leq -D(\vec{\lambda}(t))\,.
\end{equation}
Eq.~(\ref{eq:DchvsD}) implies an exponential decay, $0\leq D(\vec{\lambda}(t))\le D(\vec{\lambda}(0))e^{-t}$. Hence, $D(\vec{\lambda}(t))\rightarrow 0$ for $t\rightarrow \infty$, as we were hoping for.

It is important that the flow (\ref{eq:flow}) does not change the quantum state $\ket{\Psi}$ too much. To confirm this we observe
\begin{equation}\label{eq:NPsich}
\begin{split}
\big\|\dot{\Psi}(t)\big\|^2
 & = \big\|(\mathbb{1}-\ketbra{\Psi(t)}{\Psi(t)})\hat{D}_{\Psi(t)}\Psi(t)\big\|^2  \\
 & = \mbox{Var}_{\Psi(t)}\hat{D}_{\Psi(t)}\,.
\end{split}
\end{equation}
This allows us by integrating both sides of Eq.~(\ref{eq:NPsich}) between $0\le t_1\le t_2$, to estimate the change of the quantum state, leading to \cite{Note1}
\begin{equation}\label{eq:Psich}
\norm{\Psi(t_2)-\Psi(t_1)} \le \sqrt{2D(\vec{\lambda}(t_1))}-\sqrt{2D(\vec{\lambda}(t_2))}\,.
\end{equation}
Eq.~(\ref{eq:Psich}) implies that $\Psi_\infty\equiv\lim_{t\to\infty}\Psi(t)$ exists, and
\begin{equation}\label{eq:Psidiff}
\norm{\Psi_\infty-\Psi}\le \sqrt{2D(\vec{\lambda})}\,.
\end{equation}
Eq.~(\ref{eq:Psidiff}) is the result we were aiming at. A quantum state $\ket{\Psi}$ is close to a quantum state $\ket{\Psi_\infty}$ exhibiting the simplified structure \ref{eq:Psi} implied by Eq.~(\ref{eq:Dhat}) whenever its occupation number vector $\vec{\lambda}$ is close to the polytope facet $F_D$.

Note that in the above derivation we have assumed that the eigenvalues $\lambda^{(i)}(t)$ are nondegenerate for all times $t$. This can be ensured if the initial $D(\vec{\lambda})$ is small compared to $\min_i\{\lambda^{(i)}-\lambda^{(i+1)}\}$, since the latter one depends continuously on $\ket{\Psi}$ and the former puts a uniform (in $t$) upper bound on $\norm{\Psi(t)-\Psi}$ (recall Eq.~(\ref{eq:Psich})). There is little doubt that our results also hold in the case of (quasi-)degenerate occupation numbers: In \cite{Note1} we present a brute-force proof for the specific setting $(N,d)=(3,6)$ including the case of degenerate occupation numbers. Monte-Carlo sampling allows us to extend this to the settings $(3,7),(4,7)$.

We also would like to stress that according to Eq.~(\ref{eq:Psidiff}) $\ket{\Psi}$  has the approximately simplified structure (\ref{eq:Psi})
with respect to the natural orbitals of $\ket{\Psi_{\infty}}$.
Hence, the ultimate result states that any multipartite quantum state $\ket{\Psi}$ can be approximated by the structural simplified form corresponding to saturation of a quantum marginal constraint $D$ up to an error bounded by
\begin{equation}\label{eq:WEst}
1-\|\hat{P}_D^{(\mathcal{B}_1)}\Psi\|^2\leq 2 D(\vec{\lambda})\,,
\end{equation}
namely the distance of $\vec{\lambda}$ to the corresponding polytope facet $F_D$.
Here, $\mathcal{B}_1$ denotes the unspecified reference basis (bases) for the local Hilbert space(s) $\mathcal{H}_1^{(d)}$ and $\hat{P}_D^{(\mathcal{B}_1)}$ denotes the projection operator on the zero-eigenspace of the corresponding $\hat{D}_{\mathcal{B}_1}$-operator (\ref{eq:Dhatf}) (or (\ref{eq:Dhat}) for distinguishable subsystems). Only for the case of no approximate degeneracies, $\mathcal{B}_1$ is given by the eigenvectors of the single-party marginal(s).

\paragraph{Extensions of the Hartree-Fock ansatz.---}
In the following, we discuss an interesting implication of our main result (\ref{eq:WEst}) concerning natural extensions of the Hartree-Fock variational ansatz. The Hartree-Fock ansatz restricts the variational ground state search to the manifold $\mathcal{M}_{0,0}$ of single Slater determinants.
The index $(0,0)$ of $\mathcal{M}_{0,0}$ should indicate that the corresponding active space consists of $0$ active electrons and $0$ active orbitals. Clearly, $\ket{\Psi} \in \mathcal{M}_{0,0}$ is equivalent to $\vec{\lambda}=(1,\ldots,1,0,\ldots)\equiv \vec{\lambda}_{HF}$. This allows us to characterize the Hartree-Fock ansatz alternatively: It corresponds to $\{\vec{\lambda}_{HF}\}$, a (zero-dimensional) facet of the polytope $\mathcal{P}$ defined by all those $\vec{\lambda}$ whose entries $\lambda^{(i)}$ attain one of the bounds of the Pauli constraints $0\leq \lambda^{(i)}\leq 1$ for all $i$. In case of increasing correlations the size of the active space needs to be increased. In general, if $r\equiv N-N_a$ electrons are frozen and additional $s\equiv d-d_a-N+N_a$ orbitals are inactive the corresponding quantum states form an active space manifold $\mathcal{M}_{N_a,d_a}$. $\mathcal{M}_{N_a,d_a}$ can alternatively be characterized by the corresponding facet of $\mathcal{P}$ formed by the vectors $\vec{\lambda}=(\underbrace{1,\ldots,1}_{r},\vec{\lambda}_a,\underbrace{0,\ldots,0}_s)$ saturating the collective Pauli constraint $S_{r,s}(\vec{\lambda})\equiv {\sum_{i=1}^r (1-\lambda^{(i)})}+\sum_{j=d-s+1}^d \lambda^{(j)} \geq 0$.
The respective variational method based on $\mathcal{M}_{N_a,d_a}\equiv \mathcal{M}_{S_{r,s}}$ is called
\emph{Complete Active Space Self-Consistent-Field} (CASSCF) ansatz \cite{QChemJensen}. Its relevance and success for the analysis of ground states of atoms and molecules is based on the fact that those systems approximately saturate some of the Pauli constraints.
In a similar fashion we can also define extremal parts of the polytope $\mathcal{P}$ by saturation of one (or more) \emph{generalized} Pauli constraint(s) $D$. This would then give rise to a respective state manifold $\mathcal{M}_D$ containing all states which have the simplified structure (\ref{eq:Psi}) implied by (\ref{eq:Dhat}) with respect to some local reference basis $\mathcal{B}_1$.
The debate \cite{Kly1,CS2013,BenavLiQuasi,Mazz14,CSthesis,MazzOpen,CSQuasipinning,Mazzagain,BenavQuasi2,RDMFT,Alex,CS2015Hubbard,RDMFT2,CS2016a,NewMazziotti,CS2016b,CSQ}
on the physical relevance of the generalized Pauli constraints is, however, still ongoing.

Based on our main result (\ref{eq:WEst}) we will now provide an intriguing estimate of the numerical quality of such variational ansatzes defined through extremal one-particle information. By quantitative means, we will show that such an ansatz based on a facet $F_D$ of $\mathcal{P}$ corresponding to saturation of some (generalized) Pauli constraint(s) reconstructs most of the correlation energy whenever $\vec{\lambda}_0$ of the ground state $\ket{\Psi_0}$ of the Hamiltonian $\hat{H}$ lies close to $F_D$. For this, we denote by $\ket{\Psi_D}$ the variational ground state with corresponding vector $\vec{\lambda}_D$ and energy $E_D$. Moreover, we introduce the Hartree-Fock ground state $\ket{\Psi_{HF}}$ which has the energy $E_{HF}$ (see also Fig.~\ref{fig:PolyEn}).

\begin{figure}[b]
\centering
\includegraphics[width=4.5cm]{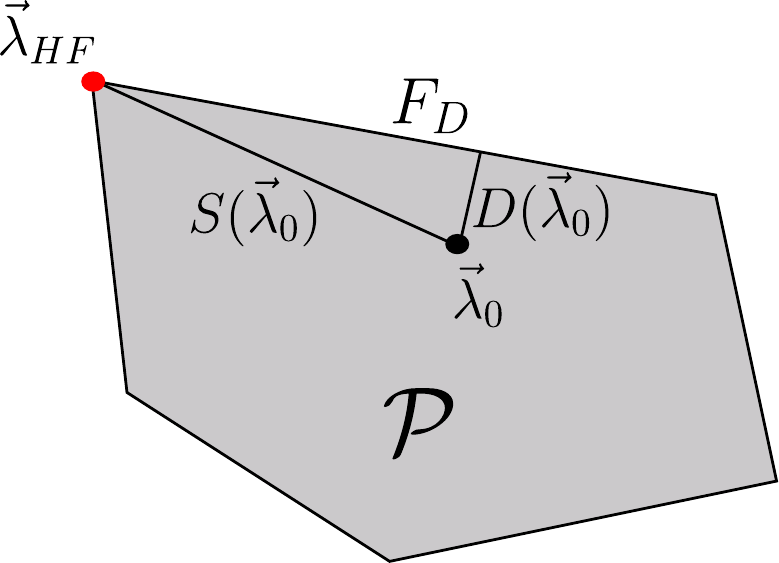}
\hspace{0.5cm}
\includegraphics[width=2.8cm]{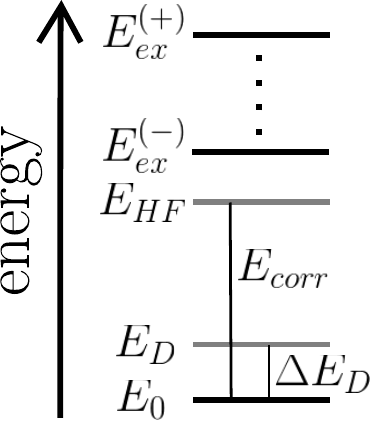}
\caption{Left: Illustration of the distances $D(\vec{\lambda}_0)$ and $S(\vec{\lambda}_0)$ of $\vec{\lambda}_0$ (\ref{eq:1rdm}) for the ground state $\ket{\Psi_0}$ to the polytope facet $F_D$ (defining the variational ansatz) and to the Hartree-Fock point $\vec{\lambda}_{HF}$, respectively.
Right: Energy spectrum of the Hamiltonian $\hat{H}$ is shown in black, Hartree-Fock energy $E_{HF}$ and the variational energy $E_D$ corresponding to $F_D$ in gray.}
\label{fig:PolyEn}
\end{figure}

First, by referring to the energy gap $E_{ex}^{(-)}-E_0$ between the ground state and first excited state we can relate for any quantum state $\ket{\tilde{\Psi}}$ its energy expectation value $\tilde{E}\equiv \bra{\tilde{\Psi}}\hat{H} \ket{\tilde{\Psi}}$ relative to the ground state energy to its $L^2$-weight outside the ground state subspace $\mathcal{H}_{E_0}$ (see Lemma \ref{lem:EvsN} in \cite{Note1}). Indeed, when $\tilde{E}$ lies close to $E_0$ relative to the gap most of the weight of $\ket{\tilde{\Psi}}$ has to lie within the ground state subspace. On the other hand, if $\tilde{E}-E_0>0$ at least some weight has to lie on the excited states of $\hat{H}$.
Combining this relation between the energy and the $N$-particle picture
with the one between the one- and the $N$-particle picture derived above (cf.~Eq.~(\ref{eq:WEst})) leads to a striking relation between the energy and the one-particle picture (assuming $\hat{H}$ has a unique ground state): The energy error $\Delta E_D \equiv E_D-E_0$ of the variational ansatz based on the polytope-facet $F_D$ is bounded from above \cite{Note1}
\begin{equation}\label{eq:estimate1}
 \Delta E_D \leq C\, D(\vec{\lambda}_0)\,,
\end{equation}
where $C\equiv2(E_{ex}^{(+)}-E_0)$. From a geometrical viewpoint, Eq.~(\ref{eq:estimate1}) states that the energy error is linearly bounded from above by the distance $D(\vec{\lambda}_0)$ of the spectral vector $\vec{\lambda}_0$ of the ground state to the respective facet $F_D$.

Even a more sophisticated estimate can be found \cite{Note1}
\begin{equation}\label{eq:estimate2}
\frac{\Delta E_D}{E_{corr}}
 \leq K\, \frac{D(\vec{\lambda}_0)}{S(\vec{\lambda}_0)}\,,
\end{equation}
where $E_{corr}\equiv E_{HF}-E_0$ is the omnipresent correlation energy and $K\equiv 2 N (E_{ex}^{(+)}-E_0)/(E_{ex}^{(-)}-E_0)$.
Estimate (\ref{eq:estimate2}) states that the fraction $\Delta E/E_{\rm corr}$ of the total correlation energy $E_{corr}$ not recovered by the variational ansatz based on $F_D$ is bounded by the ratio $D(\vec{\lambda}_0)/S(\vec{\lambda}_0)$ of $\vec{\lambda}_0$'s distances $D(\vec{\lambda}_0)$ to $F_D$ and $S(\vec{\lambda}_0)$ to the Hartree-Fock point $\vec{\lambda}_{HF}$ (see also Fig.~\ref{fig:PolyEn}).
We also would like to stress that the universality of (\ref{eq:estimate1}) and (\ref{eq:estimate2}), holding for all $\hat{H}$ with non-degenerate ground state, inevitably implies that the prefactors $C$ and $K$ depend linearly on $E_{ex}^{(+)}$. This is due to the fact that without further specification of the Hamiltonian the small weight of $\ket{\Psi_D}$ outside the ground state space could, at least in principle, lie on the highest excited state. In practice, however, one can expect that this weight lies mainly on the lowest few excited states which would improve the constants $C$ and $K$, significantly.

\paragraph{Summary and Conclusion.---}
We have proven by exploiting an elegant flow-approach that significant structural simplifications follow for the total, multipartite quantum state whenever the spectra of the \emph{single}-party marginals lie close to (or even on) the boundary of the allowed region (polytope).
This implication of \emph{``quasiextremal local information''} is remarkable since the unique determination of \emph{generic} quantum states of, e.g., a 300-party system requires the knowledge of marginals of size possibly up to 200 \cite{LindenPRL2} (which is rather difficult to access in experiments).

A comment is in order regarding the possible presence of quasiextremal local information. In contrast to \emph{generic} quantum states, typically not exhibiting quasiextremal local information, the situation can be quite different for \emph{ground states} of \emph{physical systems} with \emph{local interactions}. This is for instance the case for systems of confined fermions. There, one observes an (approximate) saturation of some Pauli constraints which has its origin in the strong conflict between energy minimization and exchange symmetry. This is also the reason for the success of the \emph{Complete-Active Space Self-Consistent Field} method exploiting the corresponding structural simplifications (`reduced active space') based on Eq.~(\ref{eq:Dhat}).

In the form of estimates (\ref{eq:estimate1}) and (\ref{eq:estimate2}) we have also revealed an intriguing universal relation between the numerical quality of prominent variational methods and the distance of the one-fermion density matrix $\rho_1$ to the polytope boundary. We expect that this may stimulate fruitful follow-up ideas. Just to name one, recall that \emph{Reduced Density Matrix Functional Theory} seeks an exact functional $\mathcal{F}$ of $\rho_1$ whose minimization yields the correct ground state energy and the respective  $\rho_1$.
While the existence of the polytope $\mathcal{P}$ is taken into account so far \emph{only} by the restriction of the minimization process to $\mathcal{P}$ \cite{RDMFT}, our work shows that the \emph{whole vicinity} of the polytope boundary $\partial \mathcal{P}$ should play an important role: Density matrices $\rho_1$ near $\partial \mathcal{P}$ correspond to very specific $N$-fermion quantum states. Hence, the exact functional $\mathcal{F}$  needs to include a term with a strongly repulsive behavior close to the polytope-boundary.

\begin{acknowledgements}
We are grateful to D.\hspace{0.5mm}Gross and A.\hspace{0.5mm}Lopes for various stimulating and illuminating discussions.
In addition, we thank M.\hspace{0.5mm}Christandl, D.\hspace{0.5mm}Jaksch, V.\hspace{0.5mm}Vedral, M.\hspace{0.5mm}Walter and Z.\hspace{0.5mm}Zimbor\'as for helpful discussions. We acknowledge financial support from the Swiss National Science Foundation (Grant P2EZP2 152190), the Oxford Martin Programme on Bio-Inspired Quantum Technologies and the UK Engineering and Physical Sciences Research Council (Grant EP/P007155/1) (C.S.), the European
Research Council (ERC Grant Agreement no 337603), the Danish Council for Independent Research (Sapere Aude) and VILLUM FONDEN via the QMATH
Centre of Excellence (Grant No. 10059) (P.V.).
\end{acknowledgements}

\bibliography{EvsNvs1}

\onecolumngrid
\newpage
\begin{center}\Large{\textbf{Supplemental Material}}
\end{center}
\setcounter{equation}{0}
\setcounter{figure}{0}
\setcounter{table}{0}
\makeatletter
\renewcommand{\theequation}{S\arabic{equation}}
\renewcommand{\thefigure}{S\arabic{figure}}
\vspace{0.5cm}

\section{Flow-based proof of the stability}\label{app:flow}
We provide technical details used in the flow-based derivation of the structural
implications in case of quasiextremal local information.
\subsection{Proof of Eq.~\eqref{eq:Dchange}}
We recall Eqs.~(\ref{eq:gpc}-\ref{eq:rhochange}), in particular the form of the one-particle reduced density matrix $\rho_1(t)\equiv \sum_{i=1}^d \lambda^{(i)}(t) \ket{i(t)}\!\bra{i(t)}$, and introduce the particle number operator $\hat{n}_i(t)$ acting on the $N$-fermion Hilbert space for the eigenstate $\ket{i(t)}$ of $\rho_1(t)$. Rayleigh-Schr\"odinger perturbation theory applied to $\rho_1(t)$ then yields (where the `dot' stands for $\frac{d}{dt}$)
\begin{equation}
\begin{split}
\frac{d}{dt}D(\vec{\lambda}(t))
 & = \sum_{i=1}^d\kappa_i\dot{\lambda}^{(i)}(t)  \\
 & = \sum_{i=1}^d \kappa_i \bra{i(t)} \dot{\rho}_1(t) \ket{i(t)} \\
 & = \sum_{i=1}^d \kappa_i \Tr_1\!\big[\ket{i(t)}\!\bra{i(t)}\dot{\rho}_1(t)\big] \\
 & = N \sum_{i=1}^d \kappa_i \Tr_1\!\Big[\ket{i(t)}\!\bra{i(t)}\Tr_{N-1}\!\big[\ket{\dot{\Psi}(t)}\!\bra{\Psi(t)}+\ket{\Psi(t)}\!\bra{\dot{\Psi}(t)}\big]       \Big] \\
 & = \sum_{i=1}^d\Tr_N\!\left[\kappa_i \hat{n}_i(t)\left(\ket{\dot{\Psi}(t)}\!\bra{\Psi(t)}+\ket{\Psi(t)}\!\bra{\dot{\Psi}(t)}\right)\right]  \\
 & = \Tr_N\!\left[\hat{D}_{\Psi(t)}\left(\ket{\dot{\Psi}(t)}\!\bra{\Psi(t)}+\ket{\Psi(t)}\!\bra{\dot{\Psi}(t)}\right)\right]  \\
 & = -\Tr_N\!\Big[\hat{D}_{\Psi(t)}\Big(\big(\mathbb{1}-\ketbra{\Psi(t)}{\Psi(t)}\big)\hat{D}_{\Psi(t)}\ketbra{\Psi(t)}{\Psi(t)}
 +\ketbra{\Psi(t)}{\Psi(t)}\hat{D}_{\Psi(t)}\big(\mathbb{1}-\ketbra{\Psi(t)}{\Psi(t)}\big)\Big)\Big]  \\
 & = -2\Big\{\Tr_N\!\big[\hat{D}_{\Psi(t)}^2\ketbra{\Psi(t)}{\Psi(t)}\big]-\left(\Tr_N\!\big[ \hat{D}_{\Psi(t)}\ketbra{\Psi(t)}{\Psi(t)}\big]\right)^2\Big\} \\
 & = -2\,\mbox{Var}_{\Psi(t)}\hat{D}_{\Psi(t)}\,.
\end{split}
\end{equation}
The formula for the first order correction according to Rayleigh-Schr\"odinger perturbation theory has been used in the second line. In the fourth to last line we have used in particular $\Tr_N\!\big[\ket{\Psi(t)}\!\bra{\dot{\Psi}(t)}\big]=0$ and in the third to last line the definition of the flow, Eq.~(\ref{eq:flow}).

\subsection{Proof of Eq.~\eqref{eq:Varest}}
Let us first explain why one may expect an estimate of a form similar to Eq.~\eqref{eq:Varest}. We are considering for a fixed quantum state $\ket{\Psi}$ the operator $\hat{D}_{\Psi}$ given by Eq.~(\ref{eq:Dhatf}) (the time-dependence of those quantities is not relevant and we suppress it). It is crucial that the operator $\hat{D}_{\Psi}$ has an integer-valued spectrum following from the fact that the quantum marginal constraints (\ref{eq:gpc}) have integer coefficients. Moreover, we have $\bra{\Psi}\hat{D}_{\Psi}\ket{\Psi}=D(\vec{\lambda})$. Let us now assume that $D(\vec{\lambda})$ has a finite distance to the next integer, i.e.~a finite distance to the closest eigenvalue of $\hat{D}_{\Psi}$. Consequently, the corresponding quantum state $\ket{\Psi}$ needs to have weight on at least one eigenstate with an eigenvalue larger than $D(\vec{\lambda})$ and on one with an eigenvalue smaller than $D(\vec{\lambda})$. Hence, the variance of $\hat{D}_{\Psi}$ cannot vanish (recall that the variance of an operator vanishes if and only if the state lies in an eigenspace of that operator).
Therefore, the quantity $D(\vec{\lambda})$ (by assuming it to be smaller than $1$) should provide a lower bound on $\mbox{Var}_{\Psi}\!\hat{D}_{\Psi}$. Indeed, the following lemma on random variables which we are going to prove below will establish such a relation (namely Eq.~\eqref{eq:Varest}):

\begin{lem}\label{lem:VarExp}
Let $X$ be a real-valued random variable, $a,b\in\mathbb{R}$ with $\prob(X\in(a,b))=0$, $\mu=\mean X$ the expectation value of $X$ and $\sigma^2=\mean X^2-\mu^2$ its variance such that $a<\mu<b$ . Then
\begin{equation*}
\sigma^2\ge(\mu-a)(b-\mu)
\end{equation*}
\end{lem}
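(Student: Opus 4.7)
The decisive input is the support condition $\prob(X \in (a,b)) = 0$, which means $X$ almost surely lies in $(-\infty,a] \cup [b,\infty)$. On each of these two branches the quadratic $(X-a)(X-b)$ is a product of two same-signed factors and hence nonnegative, so $(X-a)(X-b) \ge 0$ almost surely. My plan is to extract the whole statement from this single pointwise inequality by taking expectations.

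Explicitly, I would take expectations of $(X-a)(X-b)\ge 0$ to get $\mean X^2 - (a+b)\mu + ab \ge 0$, then substitute $\mean X^2 = \sigma^2 + \mu^2$. The resulting inequality $\sigma^2 + \mu^2 - (a+b)\mu + ab \ge 0$ is exactly $\sigma^2 \ge (\mu-a)(b-\mu)$ after recognising $(a+b)\mu - \mu^2 - ab = (\mu-a)(b-\mu)$.

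There is essentially no technical obstacle: the lemma is a one-line consequence of the fact that a quadratic with real roots $a,b$ is nonnegative outside the open interval $(a,b)$. The only thing to be careful about is to notice that the hypothesis $a<\mu<b$ plays no role in the derivation itself; it is used only to ensure that the right-hand side $(\mu-a)(b-\mu)$ is strictly positive, making the bound nontrivial.

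For the intended application (Eq.~\eqref{eq:Varest}), one takes $X$ to be the value of $\hat{D}_{\Psi(t)}$ sampled in the state $\ket{\Psi(t)}$, so that $\mu = D(\vec{\lambda}(t))$ and $\sigma^2 = \mbox{Var}_{\Psi(t)}\hat{D}_{\Psi(t)}$. Since the generalized Pauli constraint~\eqref{eq:gpc} has integer coefficients, $\hat{D}_{\Psi(t)}$ has an integer spectrum; for $0<D(\vec{\lambda}(t))<1$ one may therefore choose the consecutive integers $a=0$, $b=1$, and the lemma delivers $\mbox{Var}_{\Psi(t)}\hat{D}_{\Psi(t)} \ge D(\vec{\lambda}(t))\bigl[1-D(\vec{\lambda}(t))\bigr]$, which is precisely the variance estimate used in the flow argument.
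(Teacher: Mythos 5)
Your proof is correct and is essentially the paper's own argument in a slightly different notation: the paper writes the support condition as $\prob\bigl(|X-\tfrac{a+b}{2}|\ge\tfrac{b-a}{2}\bigr)=1$ and completes the square, which is algebraically identical to your pointwise inequality $(X-a)(X-b)\ge 0$ followed by taking expectations. Your observation that $a<\mu<b$ is only needed to make the bound nontrivial, and your description of the application with $a=0$, $b=1$, are both consistent with the paper.
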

\begin{proof}
By assumption, $\prob(|X-\frac{a+b}{2}|\ge\frac{b-a}{2})=1$, therefore
\begin{equation*}
\begin{split}
\mean X^2-\mu^2
 & = \mean\left[\left(X-\frac{a+b}{2}\right)^2-\left(\frac{a+b}{2}\right)^2+(a+b)X\right]-\mu^2  \\
 & \ge \left(\frac{b-a}{2}\right)^2-\left(\frac{a+b}{2}\right)^2+(a+b)\mu-\mu^2  \\
 & = (\mu-a)(b-\mu).
\end{split}
\end{equation*}
\end{proof}
We apply Lemma \ref{lem:VarExp} to the random variable $X$ corresponding to $(\hat{D}_{\Psi}, \ket{\Psi})$. This is the random variable which attains only values $X \in \mbox{spec}(\hat{D}_{\Psi})$ and those with probabilities $\bra{\Psi} \hat{P}_{\Delta}\ket{\Psi}$, where $\hat{P}_{\Delta}$ is the projector on the eigenspace of $\hat{D}_{\Psi}$ with eigenvalue $\Delta$. By choosing $a=0$ and $b=1$ estimate \eqref{eq:Varest} follows (where $\mu \equiv \bra{\Psi}\hat{D}_{\Psi}\ket{\Psi} = D(\vec{\lambda})$).

%

\subsection{Proof of Eq.~\eqref{eq:Psich}}
To derive Eq.~\eqref{eq:Psich} we apply the (second) fundamental theorem of calculus to $d\ket{\Psi(t)}/dt \equiv \ket{\dot{\Psi}(t)}$ to express for $0\leq t_1 \leq t_2$ the difference $\ket{\Psi(t_2)}-\ket{\Psi(t_1)}$ as an integral,
\begin{equation}
\begin{split}
\big\|\Psi(t_2)-\Psi(t_1)\big\|
 & = \Big\| \int_{t_1}^{t_2}\dot{\Psi}(t)dt \,\Big\| \\
 & \le \int_{t_1}^{t_2}\big\|\dot{\Psi}(t)\big\|dt  \\
 & = \int_{t_1}^{t_2}\sqrt{ \mbox{Var}_{\Psi(t)}\hat{D}_{\Psi(t)}}\,dt \\
 & = \frac{1}{\sqrt{2}}\int_{t_1}^{t_2}\sqrt{-\frac{d}{dt}D(\vec{\lambda}(t))}\,dt \\
 & = \frac{1}{\sqrt{2}}\int_{t_1}^{t_2} \frac{-\frac{d}{dt}D(\vec{\lambda}(t))}{\sqrt{-\frac{d}{dt}D(\vec{\lambda}(t))}}dt \\
 & \le \frac{1}{\sqrt{2}}\int_{t_1}^{t_2}\frac{-\frac{d}{dt}D(\vec{\lambda}(t))}{\sqrt{D(\vec{\lambda}(t))}}dt \\
 &= -\frac{1}{\sqrt{2}}\int_{D(\vec{\lambda}(t_1))}^{D(\vec{\lambda}(t_2))}\frac{dD}{\sqrt{D}} \\
 &= \sqrt{2D(\vec{\lambda}(t_1))}-\sqrt{2D(\vec{\lambda}(t_2))} \\
 & \le \sqrt{2D(\vec{\lambda}(t_1))}\,.
\end{split}
\end{equation}
In the third line we have used Eq.~(\ref{eq:NPsich}), in line four Eq.~(\ref{eq:Dchange}) and in the fourth to last line estimate (\ref{eq:DchvsD}).

\section{Brute-Force proof of the stability for the setting $(3,6)$}\label{app:SmallSet}
The generalized Pauli constraints for the Borland-Dennis setting, $(N,d)=(3,6)$, read \cite{Borl1972}
\begin{eqnarray}
&&\lambda_1\geq\lambda_2\geq \ldots\geq \lambda_6\geq 0 \label{BDordering} \\
&&\lambda_1+\lambda_6=\lambda_2+\lambda_5=\lambda_3+\lambda_4=1 \label{BDGBCa} \\
&& D(\vec{\lambda})\equiv2-(\lambda_1+\lambda_2+\lambda_4)\geq 0 \label{BDGPCb}\,.
\end{eqnarray}
Since the constraints (\ref{BDGBCa}) take the form of equalities they imply universal structural simplifications for
any state $\ket{\Psi}\in \wedge^3[\mathcal{H}_1^{(6)}]$. $\ket{\Psi}$'s most general form is namely given by
\begin{equation}\label{8SD}
|\Psi\rangle =\alpha |1,2,3\rangle+ \beta |1,2,4\rangle+ \gamma |1,3,5\rangle+ \delta |2,3,6\rangle +\nu |1,4,5\rangle+\mu |2,4,6\rangle+ \xi |3,5,6\rangle+\zeta |4,5,6\rangle \,,
\end{equation}
where the states $|k\rangle,  k=1,2,\ldots,6$, are the eigenstates of $\rho_1$. Hence, (\ref{8SD}) is a self-consistent expansion.

In case of saturation of the generalized Pauli constraint (\ref{BDGPCb}) further simplifications follow according to the selection rule \eqref{eq:Dhat} and \eqref{eq:Psi}, respectively (with the $\hat{D}_{\Psi}$-operator given by Eq.~(\ref{eq:Dhatf})),
\begin{equation}\label{BDPsiPin}
|\Psi\rangle = \alpha |1,2,3\rangle + \nu |1,4,5\rangle+\mu |2,4,6\rangle \,.
\end{equation}

We now prove that every quantum state with $\vec{\lambda}$ close to the polytope facet corresponding to saturation of (\ref{BDGPCb}) has approximately the form (\ref{BDPsiPin}). Actually, we prove even a stronger statement. We show that every quantum state can be written as
\begin{equation}\label{BDrelaxed}
|\Psi\rangle = U_{3,4}\big[\alpha |1,2,3\rangle + \nu |1,4,5\rangle+\mu |2,4,6\rangle\big]+\ket{\Psi_R} \,
\end{equation}
where $\ket{i}$ are the eigenstates of the one-particle reduced density matrix of $\ket{\Psi}$, $U_{34}$ an appropriate unitary transformation `rotating' in the subspace $\ket{3},\ket{4}$, and
\begin{equation}\label{BDerror}
  \|\Psi_R\| \leq 2 \frac{D(\vec{\lambda})}{1-D(\vec{\lambda})}\leq 4\, D(\vec{\lambda})\,.
\end{equation}

To prove this statement we will use some results already derived in Ref.~\cite{CSQuasipinning}.
The eight coefficients $\alpha,\ldots,\zeta$ obey further restrictions (`self-consistency conditions'), guaranteeing that the eigenvalues of $\rho_1$ are decreasingly ordered and that $\rho_1$ is diagonal with respect to its eigenstates $|k\rangle,  k=1,2,\ldots,6$. Consequently, we have
\begin{eqnarray}
\lambda_4 &=& |\beta|^2+|\mu|^2+|\nu|^2+|\zeta|^2   \nonumber \\
\lambda_5 &=& |\gamma|^2+|\nu|^2+|\xi|^2+|\zeta|^2  \nonumber \\
\lambda_6 &=& |\delta|^2+|\mu|^2+|\xi|^2+|\zeta|^2\,
\end{eqnarray}
and the largest three eigenvalues follow from Eq.~(\ref{BDGBCa}).
In the following, we choose $\lambda_4,\lambda_5$ and $\lambda_6$ as the free
variables.

The following two theorems proven in Ref.~\cite{CSQuasipinning} will be needed.
\begin{thm}\label{lem:xizeta}
For $|\Psi\rangle\in \wedge^3[\mathcal{H}_1^{(6)}]$ expanded according to Eq.~(\ref{8SD}) one has
\begin{equation}\label{xizetabound}
|\xi|^2+|\zeta|^2 \leq D(\vec{\lambda})\,.
\end{equation}
\end{thm}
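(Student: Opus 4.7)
The plan is to reduce the claim to the single off-diagonal constraint $(\rho_1)_{3,4}=0$ and close it by Cauchy--Schwarz combined with the Borland--Dennis ordering $\lambda_3\geq\lambda_4$.

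First I would exploit the fact that $\rho_1$ is diagonal in its eigenbasis $\ket{1},\ldots,\ket{6}$, so $(\rho_1)_{3,4}=\bra{\Psi}a_3^\dagger a_4\ket{\Psi}=0$. Evaluating this matrix element in the Slater expansion~(\ref{8SD}), only the four pairs related by the swap $3\leftrightarrow 4$ contribute, namely $(\ket{1,2,3},\ket{1,2,4})$, $(\ket{1,3,5},\ket{1,4,5})$, $(\ket{2,3,6},\ket{2,4,6})$ and $(\ket{3,5,6},\ket{4,5,6})$, and a careful count of fermionic signs gives the clean orthogonality relation
\begin{equation*}
\alpha^{*}\beta+\gamma^{*}\nu+\delta^{*}\mu+\xi^{*}\zeta=0.
\end{equation*}

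The Cauchy--Schwarz inequality applied to this identity yields $|\alpha|^{2}|\beta|^{2}\leq(|\gamma|^{2}+|\delta|^{2}+|\xi|^{2})(|\nu|^{2}+|\mu|^{2}+|\zeta|^{2})$; recognising the two factors on the right as $\lambda_{3}-|\alpha|^{2}$ and $\lambda_{4}-|\beta|^{2}$ (read off directly from~(\ref{8SD})) and cancelling the common $|\alpha|^{2}|\beta|^{2}$ on both sides reduces the inequality to the compact form
\begin{equation*}
\lambda_{3}\,|\beta|^{2}\leq\lambda_{4}\,(|\gamma|^{2}+|\delta|^{2}+|\xi|^{2}).
\end{equation*}
At this point the BD ordering~(\ref{BDordering}) together with $\lambda_{3}+\lambda_{4}=1$ forces $\lambda_{4}/\lambda_{3}\leq 1$, so dividing by $\lambda_{3}$ yields $|\beta|^{2}\leq|\gamma|^{2}+|\delta|^{2}+|\xi|^{2}$. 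A short rewriting of $D(\vec{\lambda})$ from~(\ref{BDGPCb}) using~(\ref{BDGBCa}) and~(\ref{8SD}) identifies the difference $D(\vec{\lambda})-(|\xi|^{2}+|\zeta|^{2})$ with exactly $|\gamma|^{2}+|\delta|^{2}+|\xi|^{2}-|\beta|^{2}$, and the desired bound~(\ref{xizetabound}) follows.

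The conceptual hurdle I anticipate is not the algebra but the recognition that only \emph{one} of the three natural pairwise conditions $(\rho_1)_{1,6}=(\rho_1)_{2,5}=(\rho_1)_{3,4}=0$ is needed: the generalised Pauli constraint~(\ref{BDGPCb}) singles out the pair $(3,4)$, and the Cauchy--Schwarz factor left over after the manipulation above is supplied for free by the BD ordering $\lambda_{3}\geq\lambda_{4}$. Proceeding without this observation would force one to track the two companion orthogonalities as well and would be substantially more cumbersome.
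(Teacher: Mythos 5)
Your argument is correct, and it is worth pointing out that the paper itself supplies no proof of Theorem~\ref{lem:xizeta}: the result is imported from Ref.~\cite{CSQuasipinning}, so your write-up fills a genuine gap rather than duplicating an in-paper derivation. Every step checks out. The orthogonality relation you derive is exactly the paper's Eq.~(\ref{off34}); Cauchy--Schwarz then gives $|\alpha|^2|\beta|^2\le\big(|\gamma|^2+|\delta|^2+|\xi|^2\big)\big(|\nu|^2+|\mu|^2+|\zeta|^2\big)=(\lambda_3-|\alpha|^2)(\lambda_4-|\beta|^2)$, and cancelling $|\alpha|^2|\beta|^2$ yields $\lambda_3|\beta|^2+\lambda_4|\alpha|^2\le\lambda_3\lambda_4$, i.e.\ $\lambda_3|\beta|^2\le\lambda_4\big(|\gamma|^2+|\delta|^2+|\xi|^2\big)$ as you state. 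The division by $\lambda_3$ is safe because (\ref{BDordering}) and $\lambda_3+\lambda_4=1$ force $\lambda_3\ge\tfrac12>0$, so $|\beta|^2\le|\gamma|^2+|\delta|^2+|\xi|^2$. Finally, the identity $D(\vec{\lambda})=2-\lambda_1-\lambda_2-\lambda_4=\lambda_5+\lambda_6-\lambda_4=|\gamma|^2+|\delta|^2+2|\xi|^2+|\zeta|^2-|\beta|^2$, obtained from (\ref{BDGBCa}) and the occupancies read off from (\ref{8SD}), gives $D(\vec{\lambda})-\big(|\xi|^2+|\zeta|^2\big)=|\gamma|^2+|\delta|^2+|\xi|^2-|\beta|^2\ge0$, which is (\ref{xizetabound}). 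Two points to make explicit in a final version: the use of $(\rho_1)_{3,4}=0$ is legitimate only because (\ref{8SD}) is the self-consistent expansion in the eigenbasis of $\rho_1$, so say so; and your observation that only the $(3,4)$ off-diagonal condition is needed is accurate --- the companion relations (\ref{off16}) and (\ref{off25}) enter only in the subsequent orbital-rotation argument of the supplemental material, not here.
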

and
\begin{thm}\label{lem:unstablebound}
For $|\Psi\rangle\in \wedge^3[\mathcal{H}_1^{(6)}]$ expanded according to Eq.~(\ref{8SD}) one finds
\begin{equation}\label{unstablebound}
\|\hat{P}_{\pi_{3,4} D}^{(\Psi)} \Psi\|_{L^2}^2=|\beta|^2+|\gamma|^2+|\delta|^2  \leq \frac{D(\vec{\lambda})}{\lambda_3-\lambda_4}+3 D(\vec{\lambda})\,.
\end{equation}
Here, $\pi_{3,4}$ denotes the swapping of the third and fourth entry of the vector $\vec{\lambda}\in \mathcal{P}\subset \R^6$ and $\hat{P}_D^{(\Psi)}$ is the projector onto the zero-eigenspace of the operator $\hat{D}_{\Psi}$ (\ref{eq:Dhatf}).
\end{thm}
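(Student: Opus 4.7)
The plan is to exploit two complementary pieces of information encoded in the expansion (\ref{8SD}) written in the natural-orbital basis: the off-diagonal vanishing of $\rho_1$ between its third and fourth eigenstates, and the algebraic expression of $D(\vec{\lambda})$ at the coefficient level. Theorem~\ref{lem:xizeta} enters only as a helpful, but logically nonessential, sharpening of the bound on $e\equiv|\xi|^2+|\zeta|^2$.

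First I would derive the diagonality condition $\langle 3|\rho_1|4\rangle=0$. Only four pairs of Slater determinants in (\ref{8SD}) differ by the single-particle swap $3\leftrightarrow 4$, namely $\{|1,2,3\rangle,|1,2,4\rangle\}$, $\{|1,3,5\rangle,|1,4,5\rangle\}$, $\{|2,3,6\rangle,|2,4,6\rangle\}$, and $\{|3,5,6\rangle,|4,5,6\rangle\}$. Tracking the fermionic signs and absorbing them into phases of the coefficients yields
\begin{equation}
\alpha\bar{\beta}+\gamma\bar{\nu}+\delta\bar{\mu}+\xi\bar{\zeta}=0.\label{eq:offdiag_prop}
\end{equation}
The Cauchy-Schwarz inequality applied to (\ref{eq:offdiag_prop}) immediately separates $\beta$ from the ``dark'' amplitudes:
\begin{equation}
|\alpha|^2|\beta|^2 \leq \big(|\gamma|^2+|\delta|^2+|\xi|^2\big)\big(|\nu|^2+|\mu|^2+|\zeta|^2\big).\label{eq:CS_prop}
\end{equation}

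Next I would re-express the three quantities in (\ref{eq:CS_prop}) in terms of $\lambda_3$, $\lambda_4$, $|\beta|^2$, and $e$. The defining formula for $\lambda_4$ gives $|\nu|^2+|\mu|^2+|\zeta|^2=\lambda_4-|\beta|^2$; rewriting $D(\vec{\lambda})=\lambda_5+\lambda_6-\lambda_4$ at the coefficient level produces the identity $D(\vec{\lambda})=|\gamma|^2+|\delta|^2-|\beta|^2+2|\xi|^2+|\zeta|^2$, from which $|\gamma|^2+|\delta|^2+|\xi|^2=D(\vec{\lambda})+|\beta|^2-e$ and, using $\lambda_3=|\alpha|^2+|\gamma|^2+|\delta|^2+|\xi|^2$, also $|\alpha|^2=\lambda_3-D(\vec{\lambda})-|\beta|^2+e$. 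Substituting these three relations into (\ref{eq:CS_prop}), the $|\beta|^4$, $D(\vec{\lambda})|\beta|^2$, and $e|\beta|^2$ contributions cancel identically on the two sides, leaving the clean linear bound
\begin{equation}
(\lambda_3-\lambda_4)|\beta|^2 \leq \lambda_4\big(D(\vec{\lambda})-e\big) \leq \lambda_4\, D(\vec{\lambda}).\label{eq:betabound_prop}
\end{equation}

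Finally I would close the argument. Since $\lambda_3+\lambda_4=1$ together with $\lambda_3\geq\lambda_4$ forces $\lambda_4\leq 1/2$, inequality (\ref{eq:betabound_prop}) yields $2|\beta|^2\leq D(\vec{\lambda})/(\lambda_3-\lambda_4)$. The coefficient identity for $D(\vec{\lambda})$ also gives $|\gamma|^2+|\delta|^2\leq D(\vec{\lambda})+|\beta|^2$, and adding $|\beta|^2$ to both sides produces
\[
|\beta|^2+|\gamma|^2+|\delta|^2 \leq 2|\beta|^2+D(\vec{\lambda}) \leq \frac{D(\vec{\lambda})}{\lambda_3-\lambda_4}+D(\vec{\lambda}),
\]
which is sharper than, and hence implies, the desired bound (\ref{unstablebound}). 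The main technical obstacle is the careful bookkeeping in the substitution step: without noticing the simultaneous cancellation of the quartic and $D|\beta|^2$ terms, one only obtains a bound of order $\sqrt{D(\vec{\lambda})}$ on $|\beta|^2$, which would be far too weak to yield the linear-in-$D$ estimate needed downstream. A secondary subtlety is the near-degenerate regime $\lambda_3\approx\lambda_4$: there the natural-orbital basis becomes ill-defined and the right-hand side of (\ref{unstablebound}) diverges, consistent with the nondegeneracy assumption invoked throughout the main-text flow argument.
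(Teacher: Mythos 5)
Your proposal is correct, and it checks out in every step I verified: the off-diagonal condition $\langle 3|\rho_1|4\rangle=0$ is exactly Eq.~(\ref{off34}) of the paper, the coefficient identities $\lambda_4-|\beta|^2=|\nu|^2+|\mu|^2+|\zeta|^2$, $D(\vec{\lambda})=|\gamma|^2+|\delta|^2-|\beta|^2+2|\xi|^2+|\zeta|^2$ and $|\alpha|^2=\lambda_3-D(\vec{\lambda})-|\beta|^2+e$ are all consistent with the occupation-number formulas listed in the supplement, and the advertised cancellation of the $|\beta|^4$, $D|\beta|^2$ and $e|\beta|^2$ terms in the Cauchy--Schwarz inequality does occur, leaving $(\lambda_3-\lambda_4)|\beta|^2\le\lambda_4\,(D(\vec{\lambda})-e)\le\lambda_4\,D(\vec{\lambda})$. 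Note that the paper itself does not prove this theorem; it imports it from Ref.~\cite{CSQuasipinning}, so your argument is a genuinely self-contained replacement rather than a variant of an in-paper proof. Two remarks. First, your route actually yields the sharper bound $|\beta|^2+|\gamma|^2+|\delta|^2\le D(\vec{\lambda})/(\lambda_3-\lambda_4)+D(\vec{\lambda})$, which trivially implies the stated $+3D(\vec{\lambda})$ version, and it bypasses Theorem~\ref{lem:xizeta} entirely (you only use $e\ge 0$) — both are improvements in economy. Second, you silently skip the identification $\|\hat{P}_{\pi_{3,4}D}^{(\Psi)}\Psi\|^2=|\beta|^2+|\gamma|^2+|\delta|^2$; this is a one-line check (the zero-eigenspace of $\hat{Q}_\Psi=2\mathbb{1}-\hat{n}_1-\hat{n}_2-\hat{n}_3$ within the eight configurations of Eq.~(\ref{8SD}) is spanned by $\ket{1,2,4}$, $\ket{1,3,5}$, $\ket{2,3,6}$), but it is formally part of the statement and should be recorded.
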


The idea is now to first exploit the diagonality of $\rho_1$ with respect to its eigenstates, i.e.
\begin{eqnarray}
  0 &=& \bra{1}\rho_1\ket{6} \,\,\,\,\,= \alpha \delta^\ast +\beta \mu^\ast +\nu \zeta^\ast +\gamma \xi^\ast \label{off16}\\
  0 &=& -\bra{2}\rho_1\ket{5}= \alpha \gamma^\ast +\beta \nu^\ast +\delta \xi^\ast +\mu \zeta^\ast \label{off25}\\
  0 &=& \bra{3}\rho_1\ket{4}\,\,\,\,\,= \alpha \beta^\ast +\gamma \nu^\ast +\delta \mu^\ast +\xi \zeta^\ast \label{off34}\,.
\end{eqnarray}
For an approximate saturation of the generalized Pauli constraint $D$ we can (approximately) neglect $\xi, \zeta$ according to Theorem \ref{lem:xizeta}. Then, in a second step one can realize an orbital rotation
\begin{equation}
\ket{3}\rightarrow \ket{\tilde{3}}\,,\quad \ket{4}\rightarrow \ket{\tilde{4}}\,,\quad \ket{i}\rightarrow \ket{\tilde{i}}=\ket{i}\,,i=1,2,5,6\,,
\end{equation}
such that the state $\ket{\Psi}$ has indeed the structure (\ref{BDPsiPin}) (up to a small error of the order $D(\vec{\lambda})$).
That such a unitary transformation exists follows from Eqs.~(\ref{off16}), (\ref{off25}). Now we implement such a transformation for the case of $\xi, \zeta$ finite.

Since any quantum state $\ket{\Psi}$ always carries some significant weight on the $\alpha$ and/or $\beta$-configuration we define
\begin{equation}
\ket{\tilde{3}}\equiv \frac{\alpha\ket{3}+\beta\ket{4}}{\sqrt{|\alpha|^2+|\beta|^2}}\,,\quad \ket{\tilde{4}}\equiv \frac{-\beta^\ast\ket{3}+\alpha^\ast\ket{4}}{\sqrt{|\alpha|^2+|\beta|^2}}\,.
\end{equation}
Alternatively, this means
\begin{equation}
\ket{3}\equiv \frac{\alpha^\ast\ket{\tilde{3}}+\beta\ket{\tilde{4}}}{\sqrt{|\alpha|^2+|\beta|^2}}\,,\quad \ket{4}\equiv \frac{\beta^\ast\ket{\tilde{3}}-\alpha\ket{\tilde{4}}}{\sqrt{|\alpha|^2+|\beta|^2}}\,.
\end{equation}
On the level of $\ket{\Psi}$ the unitary transformation leads to
\begin{equation}\label{8SDU}
U_{3,4}|\Psi\rangle = \tilde{\alpha} |1,2,3\rangle+ \tilde{\beta} |1,2,4\rangle+ \tilde{\gamma} |1,3,5\rangle + \tilde{\delta} |2,3,6\rangle +\tilde{\nu} |1,4,5\rangle+\tilde{\mu} |2,4,6\rangle + \tilde{\xi} |3,5,6\rangle+\tilde{\zeta} |4,5,6\rangle \,.
\end{equation}
The hope is now that this transformation will not only imply $\tilde{\beta}=0$, according to construction, but also
$\tilde{\gamma},\tilde{\delta}\approx0$. Notice that whenever $\xi,\zeta\approx0$ we also have $\tilde{\xi},\tilde{\zeta}\approx0$.
In the following we will confirm that this is the case whenever $D(\vec{\lambda})$ is sufficiently small. First, we calculate
\begin{eqnarray}
 && \tilde{\gamma}=\frac{\alpha^\ast \gamma+\beta^\ast \nu}{\sqrt{|\alpha|^2+|\beta|^2}}\,,\quad \tilde{\nu}=\frac{\beta \gamma-\alpha \nu}{\sqrt{|\alpha|^2+|\beta|^2}} \,,\quad
  \tilde{\delta}=\frac{\alpha^\ast \delta+\beta^\ast \mu}{\sqrt{|\alpha|^2+|\beta|^2}} \nonumber \\
 && \tilde{\mu}=\frac{\beta \delta-\alpha \mu}{\sqrt{|\alpha|^2+|\beta|^2}} \,,\quad
  \tilde{\xi}=\frac{\alpha^\ast \xi+\beta^\ast \zeta}{\sqrt{|\alpha|^2+|\beta|^2}}\,,\quad \tilde{\zeta}=\frac{\beta \xi-\alpha \zeta}{\sqrt{|\alpha|^2+|\beta|^2}}\,.
\end{eqnarray}
Particularly, combining this with Eqs.~(\ref{off16}), (\ref{off25}) we find
\begin{equation}
  \tilde{\gamma} =- \frac{\delta^\ast \xi +\mu^\ast \zeta}{\sqrt{|\alpha|^2+|\beta|^2}} \,, \qquad
  \tilde{\delta} =- \frac{\nu^\ast \zeta +\gamma^\ast \xi}{\sqrt{|\alpha|^2+|\beta|^2}}\,.
\end{equation}
This leads to
\begin{eqnarray}
(|\alpha|^2+|\beta|^2) \big[|\tilde{\beta}|^2+|\tilde{\gamma}|^2+|\tilde{\delta}|^2+|\tilde{\xi}|^2+|\tilde{\zeta}|^2\big] &=&
|\delta \xi^\ast +\mu \zeta^\ast|^2+ |\gamma \xi^\ast +\nu \zeta^\ast|^2+ |\alpha \xi^\ast +\beta \zeta^\ast|^2 \, |\beta \xi^\ast -\alpha \zeta^\ast|^2 \nonumber \\
&\leq& (|\alpha|^2+|\beta|^2+|\gamma|^2+|\delta|^2+|\mu|^2+|\nu|^2) \, (|\xi|^2+|\zeta|^2)
\end{eqnarray}
and thus
\begin{equation}
|\tilde{\beta}|^2+|\tilde{\gamma}|^2+|\tilde{\delta}|^2+|\tilde{\xi}|^2+|\tilde{\zeta}|^2
   \leq \frac{D(\vec{\lambda})}{|\alpha|^2+|\beta|^2}\,.
\end{equation}
For the last estimate, we have used the normalization of the quantum state and Theorem \ref{lem:xizeta}.

Eventually, we still need to estimate $|\alpha|^2+|\beta|^2$.
By using $Q=P_{3,4}D$, where $P_{3,4}$ swaps $\lambda_3$ and $\lambda_4$ we obtain
\begin{eqnarray}
  2 D(\vec{\lambda}) &\geq & D(\vec{\lambda})+Q(\vec{\lambda}) \nonumber \\
  &=& -|\beta|^2 +|\gamma|^2+ |\delta|^2+ 2|\xi|^2 +|\zeta|^2 \nonumber \\
  &=& -|\alpha|^2 +|\nu|^2+ |\mu|^2+ 2|\zeta|^2 +|\xi|^2 \nonumber \\
  &=& -2 (|\alpha|^2+|\beta|^2) +1 + 2 (|\xi|^2+|\zeta|^2)\,.
\end{eqnarray}
By using Theorem \ref{lem:xizeta} we find
\begin{equation}
  |\alpha|^2+|\beta|^2 \geq \frac{1}{2} +|\xi|^2+|\zeta|^2-D(\vec{\lambda}) \geq \frac{1}{2} -D(\vec{\lambda})
\end{equation}
and then eventually
\begin{equation}
 1-(|\tilde{\alpha}|^2+|\tilde{\nu}|^2+|\tilde{\mu}|^2) \leq 2 \frac{D(\vec{\lambda})}{1-D(\vec{\lambda})}\,,
\end{equation}
which completes the proof.

\section{Relating energy, $N$-fermion and one-fermion picture}\label{app:EvsN}

\subsection{Relating energy and $N$-particle picture.}
The weight of an arbitrary quantum state outside the ground state space of a Hamiltonian can be estimated by the energy expectation value of that state relative to the ground state energy:
\begin{lem}\label{lem:EvsN}
Let $\hat{H}$ be a Hamiltonian on a finite Hilbert space $\mathcal{H}$, where $\hat{\pi}_{E_0}$ denotes the projector on the ground state subspace, $E_{ex}^{(-)}$ the energy of the first excited state and $E_{ex}^{(+)}$ the one of the highest excited state. Then for any $\ket{\tilde{\Psi}}\in\mathcal{H}$ with energy expectation value $\tilde{E}\equiv \bra{\tilde{\Psi}}\hat{H}\ket{\tilde{\Psi}}$ one has
\begin{equation}\label{eq:EvsN}
\frac{\tilde{E}-E_0}{E_{ex}^{(+)}-E_0} \leq 1-\|\hat{\pi}_{E_0} \tilde{\Psi}\|^2\leq \frac{\tilde{E}-E_0}{E_{ex}^{(-)}-E_0}\,,
\end{equation}
 Eq.~(\ref{eq:EvsN}) is a universal (i.e.~for all $\hat{H}$) relation between the energy picture and the $N$-particle picture.
\end{lem}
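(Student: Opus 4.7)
The plan is to diagonalize $\hat{H}$ and compare $\tilde{E} - E_0$ with the weight outside the ground-state subspace by replacing every eigenvalue strictly above $E_0$ with either the smallest or largest admissible value, yielding the two sides of \eqref{eq:EvsN}, respectively.

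First I would fix an orthonormal eigenbasis $\{\ket{E_k}\}_{k}$ of $\hat{H}$ with eigenvalues $E_k$, where $E_0$ is the ground-state energy (possibly degenerate), and expand $\ket{\tilde{\Psi}} = \sum_k c_k \ket{E_k}$. Grouping all basis vectors with $E_k = E_0$ into the ground-state subspace one has $\|\hat{\pi}_{E_0}\tilde{\Psi}\|^2 = \sum_{k:E_k=E_0} |c_k|^2$, so that the complementary weight reads
\begin{equation*}
w \equiv 1 - \|\hat{\pi}_{E_0}\tilde{\Psi}\|^2 = \sum_{k:E_k>E_0} |c_k|^2.
\end{equation*}
A direct calculation then gives
\begin{equation*}
\tilde{E}-E_0 = \sum_{k:E_k>E_0} |c_k|^2 \,(E_k - E_0).
\end{equation*}

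The two estimates in \eqref{eq:EvsN} now follow by bracketing the eigenvalue increments $E_k - E_0$ appearing in this sum. Using $E_k - E_0 \leq E_{ex}^{(+)}-E_0$ term by term yields $\tilde{E}-E_0 \leq w\,(E_{ex}^{(+)}-E_0)$, which is exactly the left inequality in \eqref{eq:EvsN}. Using instead $E_k - E_0 \geq E_{ex}^{(-)}-E_0$ for every $k$ in the summation (which is valid since the sum runs only over $E_k > E_0$, so $E_k \geq E_{ex}^{(-)}$) gives $\tilde{E}-E_0 \geq w\,(E_{ex}^{(-)}-E_0)$, which is the right inequality.

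There is no real obstacle here; the only point requiring a little care is making sure that ground-state degeneracy is handled correctly, so that the $k$-sum defining $w$ genuinely runs over $E_k > E_0$ (and hence $E_k \geq E_{ex}^{(-)}$) rather than $k \neq 0$. This is automatic once $\hat{\pi}_{E_0}$ is taken to be the projector onto the entire $E_0$-eigenspace, as in the statement of the lemma.
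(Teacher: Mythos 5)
Your proof is correct and follows essentially the same route as the paper's: a spectral decomposition of $\hat{H}$, isolating the weight outside the ground-state subspace, and bounding each excited eigenvalue increment $E_k-E_0$ from above by $E_{ex}^{(+)}-E_0$ and from below by $E_{ex}^{(-)}-E_0$. The only cosmetic difference is that you expand in an explicit eigenbasis while the paper works with the spectral projectors $\hat{\pi}_{E'}$; your remark about handling ground-state degeneracy correctly is also consistent with the paper's treatment.
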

\begin{proof}
By using the spectral decomposition of the Hamiltonian, $\hat{H}=\sum_{E'} E' \hat{\pi}_{E'}$ we obtain
\begin{eqnarray}\label{eq:estED0}
\tilde{E} &\equiv& \bra{\tilde{\Psi}}\hat{H}\ket{\tilde{\Psi}} \nonumber \\
&=& \sum_{E'} E' \bra{\tilde{\Psi}}\hat{\pi}_{E'}\ket{\tilde{\Psi}} \nonumber \\
&=& E_0 \bra{\tilde{\Psi}}\hat{\pi}_{E_0}\ket{\tilde{\Psi}} +\sum_{E'>E_0} E' \bra{\tilde{\Psi}}\hat{\pi}_{E'}\ket{\tilde{\Psi}} \nonumber \\
&\leq &  E_0 \bra{\tilde{\Psi}}\hat{\pi}_{E_0}\ket{\tilde{\Psi}}+ E_{ex}^{(+)} \bra{\tilde{\Psi}}(\mathbb{1}-\hat{\pi}_{E_0})\ket{\tilde{\Psi}} \nonumber \\
&=& E_0+(E_{ex}^{(+)}-E_0)\, \big(1-\|\hat{\pi}_{E_0} \tilde{\Psi}\|^2\big)\,.
\end{eqnarray}
In the second to last line we have estimated every excited energy from above by the maximal excited energy $E_{ex}^{(+)}$ and the lower bound in (\ref{eq:estED0}) follows immediately. Repeating the same derivation but by estimating in the second to last line every excited energy from below by the minimal excited energy $E_{ex}^{(-)}$ yields then the upper bound in (\ref{eq:estED0}).
\end{proof}

\subsection{Proof of estimate \eqref{eq:estimate1}}
Let $\ket{\Psi_D}\in \mathcal{M}_D$ be the variational minimizer of the energy expectation value. By denoting the reference basis by $\mathcal{B}_1$ (also obtained from the energy minimization), $\ket{\Psi_D}$ lies within the zero-eigenspace of $\hat{D}_{\mathcal{B}_1}$.
By using the projection operator $\hat{P}_D^{(\mathcal{B}_1)}$, projecting onto the zero-eigenspace of $\hat{D}_{\mathcal{B}_1}$, we define
\begin{equation}
\ket{\tilde{\Psi}_D} \equiv \frac{\hat{P}_D^{(\mathcal{B}_1)} \ket{\Psi_0}}{\big\|\hat{P}_D^{(\mathcal{B}_1)} \Psi_0\big\|}\,,
\end{equation}
where $\|\cdot\|$ denotes the $L^2$-norm and $\langle \tilde{\Psi}_D\ket{\tilde{\Psi}_D}=1$. Then, by using the spectral decomposition of the Hamiltonian, $\hat{H}=\sum_{E'} E' \hat{\pi}_{E'}$ and by assuming that the ground state is unique, $\hat{\pi}_{E_0}=\ket{\Psi_0}\!\bra{\Psi_0}$ , we obtain
\begin{eqnarray}\label{eq:estED1}
E_D &\equiv& \bra{\Psi_D}\hat{H}\ket{\Psi_D} \nonumber \\
&=& \min_{\small\begin{array}{c}
\ket{\Phi}\in \wedge^N[\mathcal{H}_1^{(d)}]\\
\bra{\Phi}\Phi\rangle =1\end{array}}
\bra{\Phi}\hat{P}_D^{(\mathcal{B}_1)}\hat{H}\hat{P}_D^{(\mathcal{B}_1)}\ket{\Phi} \nonumber \\
&\leq& \bra{\tilde{\Psi}_D}\hat{P}_D^{(\mathcal{B}_1)}\hat{H}\hat{P}_D^{(\mathcal{B}_1)}\ket{\tilde{\Psi}_D} \nonumber \\
&=& \bra{\tilde{\Psi}_D}\hat{H}\ket{\tilde{\Psi}_D} \nonumber \\
&=& \sum_{E'} E' \bra{\tilde{\Psi}_D}\hat{\pi}_{E'}\ket{\tilde{\Psi}_D} \nonumber \\
&\leq& E_0 \bra{\tilde{\Psi}_D}\hat{\pi}_{E_0}\ket{\tilde{\Psi}_D} +\sum_{E'>E_0} E_{ex}^{(+)} \bra{\tilde{\Psi}_D}\hat{\pi}_{E'}\ket{\tilde{\Psi}_D} \nonumber \\
&=&  E_0 \bra{\tilde{\Psi}_D}\hat{\pi}_{E_0}\ket{\tilde{\Psi}_D}+ E_{ex}^{(+)} \bra{\tilde{\Psi}_D}(\mathbb{1}-\hat{\pi}_{E_0})\ket{\tilde{\Psi}_D} \nonumber \\
&=& E_0+(E_{ex}^{(+)}-E_0)\, \big(1-|\bra{\Psi_0}\tilde{\Psi}_D\rangle|^2\big)\,.
\end{eqnarray}
In the second line, for the specific $\mathcal{B}_1$, we have used the fact that $\ket{\Psi_D}$ is the variational minimizer of the energy expectation value within the zero-eigenspace of $\hat{D}_{\mathcal{B}_1}$ (onto which $\hat{P}_D^{(\mathcal{B}_1)}$ projects).
In the third to last line we have bounded every excited energy from above by the maximal excited energy
$E_{ex}^{(+)}$. Estimate (\ref{eq:estED1}) yields the result (\ref{eq:estimate1}):
\begin{eqnarray}\label{eq:17}
\Delta E_D \equiv E_D-E_0 &\leq&  (E_{ex}^{(+)}-E_0) \,\big(1-|\bra{\Psi_0}\tilde{\Psi}_D\rangle|^2\big)\nonumber \\
&=& (E_{ex}^{(+)}-E_0)\,\big(1-\|\hat{P}_D^{(\mathcal{B}_1)} \Psi_0\|^2\big) \nonumber \\
&\leq & 2 (E_{ex}^{(+)}-E_0)\,D(\vec{\lambda}_0)\,,
\end{eqnarray}
where we have used the main result, Eq.~(\ref{eq:WEst}), in the last line.

\subsection{Proof of estimate \eqref{eq:estimate2}}
To prove estimate (\ref{eq:estimate2}) we also need to relate the energy picture with the $N$-particle picture for the Hartree-Fock ansatz. Estimate (\ref{eq:estimate1}) holds of course for any variational ansatz based on extremal local information and therefore in particular for the Hartree-Fock ansatz. Yet, we need for Eq.~(\ref{eq:estimate2}) a reversed version of (\ref{eq:estimate1}). By following closely (\ref{eq:estED1}) we find
\begin{eqnarray}\label{eq:estES1}
E_{HF} &\equiv& \bra{\Psi_{HF}}\hat{H}\ket{\Psi_{HF}} \nonumber \\
&=& \sum_{E'} E' \bra{\Psi_{HF}}\hat{\pi}_{E'}\ket{\Psi_{HF}} \nonumber \\
&\geq & E_0 \bra{\Psi_{HF}}\hat{\pi}_{E_0}\ket{\Psi_{HF}} +\sum_{E'>E_0} E_{ex}^{(-)} \bra{\Psi_{HF}}\hat{\pi}_{E'}\ket{\Psi_{HF}} \nonumber \\
&=&  E_0 \,\bra{\Psi_{HF}}\hat{\pi}_{E_0}\ket{\Psi_{HF}}+ E_{ex}^{(-)} \bra{\Psi_{HF}}(\mathbb{1}-\hat{\pi}_{E_0})\ket{\Psi_{HF}} \nonumber \\
&=& E_0+(E_{ex}^{(-)}-E_0)\, \big(1-|\bra{\Psi_0}\Psi_{HF}\rangle|^2\big)\,.
\end{eqnarray}
In the third line we have bounded every excited energy from below by the minimal excited energy
$E_{ex}^{(-)}$ and in the last line we have used that the ground state is unique, i.e.~$\hat{\pi}_{E_0}\equiv \ket{\Psi_0}\!\bra{\Psi_0}$.
Estimate (\ref{eq:estES1}) then leads to
\begin{equation}\label{eq:estES2}
E_{corr} \equiv E_{HF}-E_0 \geq  (E_{ex}^{(-)}-E_0) \,\big(1-|\bra{\Psi_0}\Psi_{HF}\rangle|^2\big)\,.
\end{equation}

Now, to connect the $N$-particle picture to the one-particle picture we will need the following lemma
\begin{lem}\label{lem:HF}
For $\ket{\Psi}\in \wedge^N[\mathcal{H}_1^{(d)}]$ and any orthonormal basis $\{\ket{i'}\}_{i=1}^d$ for the one-particle Hilbert space $\mathcal{H}_1^{(d)}$ we have
\begin{equation}
\frac{S(\vec{\lambda})}{N} \leq 1-\big|\bra{1',2',\ldots,N'} \Psi\rangle\big|^2\,,
\end{equation}
where $\vec{\lambda}=(\lambda^{(i)})_{i=1}^d$ is the non-increasingly ordered spectrum of the one-particle reduced density matrix $\rho_1$ (\ref{eq:1rdm}) of $\ket{\Psi}$ and $\ket{1',\ldots,N'}$ denotes the Slater determinant built up from the states $\ket{i'}, i'=1,\ldots,N$. $S(\vec{\lambda})\equiv \sum_{i=1}^N (1-\lambda^{(i)})+\sum_{j=N+1}^d \lambda^{(j)}$ is the $l^1$-distance of $\vec{\lambda}$ to the Hartree-Fock point $\vec{\lambda}_{HF}\equiv (1,\ldots,1,0,\ldots,0)$.
\end{lem}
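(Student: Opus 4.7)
My plan is to expand $\ket{\Psi}$ in the Slater-determinant basis generated by the primed one-particle basis $\{\ket{i'}\}_{i=1}^d$ and to compare the weight of the distinguished determinant $\ket{1',\ldots,N'}$ with the primed-basis occupation numbers. Writing $\ket{\Psi}=\sum_I c_I\,\ket{I'}$, where $I$ ranges over the $N$-element subsets of $\{1,\ldots,d\}$ and $\ket{I'}$ is the Slater determinant built from $\{\ket{i'}:i\in I\}$, the overlap appearing in the lemma is simply $\alpha=c_{I_0}$ with $I_0\equiv\{1,\ldots,N\}$. The intermediate objects I will use are the diagonal entries $\mu^{(i)}\equiv\bra{\Psi}\hat{n}_{i'}\ket{\Psi}=\sum_{I\ni i}|c_I|^2$ of $\rho_1$ in the primed basis.

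The combinatorial heart of the proof is the identity
\begin{equation}
\sum_{i=1}^N\mu^{(i)}=\sum_I|c_I|^2\,|I\cap I_0|\geq N|\alpha|^2\,,
\end{equation}
which holds because only $I=I_0$ saturates $|I\cap I_0|=N$, while every remaining term is nonnegative; it yields $|\alpha|^2\leq \frac{1}{N}\sum_{i=1}^N\mu^{(i)}$. To transfer this bound from the basis-dependent diagonal to the basis-independent spectrum I will then invoke the Schur--Horn theorem, which states that the diagonal of a Hermitian operator is majorized by its eigenvalue vector; in particular, the sum of any $N$ diagonal entries of $\rho_1$ is bounded above by the sum $\sum_{i=1}^N\lambda^{(i)}$ of its $N$ largest eigenvalues. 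Chaining the two inequalities yields $|\alpha|^2\leq \frac{1}{N}\sum_{i=1}^N\lambda^{(i)}$, and invoking the trace normalization $\sum_i\lambda^{(i)}=N$ together with the identification $\sum_{i=1}^N(1-\lambda^{(i)})=\sum_{j=N+1}^d\lambda^{(j)}$ rewrites the right-hand side directly in terms of the $\ell^1$-distance $S(\vec{\lambda})$ to the Hartree--Fock point, as claimed.

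Every step of this plan is elementary; there is no deep obstacle. The one place that needs some care is the Schur--Horn passage, which must be applied in the correct direction (eigenvalues majorize the diagonal, not the converse) so that the basis-dependent quantity $\sum_i\mu^{(i)}$ can legitimately be replaced by the basis-independent spectral sum. A pleasant consequence of the argument is that it is blind to the specific choice of the primed basis: the combinatorial identity above is manifestly basis-free and Schur--Horn is a statement about the fixed operator $\rho_1$, so the estimate holds simultaneously for every orthonormal basis $\{\ket{i'}\}$, which is precisely the universality asserted by the lemma.
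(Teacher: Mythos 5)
You are correct through the bound $|\braket{1',\ldots,N'}{\Psi}|^{2}\leq\frac{1}{N}\sum_{i=1}^{N}\lambda^{(i)}$: the identity $\sum_{i=1}^{N}\mu^{(i)}=\sum_{I}|c_{I}|^{2}\,|I\cap I_{0}|\geq N|c_{I_{0}}|^{2}$ is valid, and Schur--Horn is applied in the right direction. The gap is in your last sentence. Since $\sum_{j=N+1}^{d}\lambda^{(j)}=N-\sum_{i=1}^{N}\lambda^{(i)}=\sum_{i=1}^{N}(1-\lambda^{(i)})$, one has $\sum_{i=1}^{N}(1-\lambda^{(i)})=S(\vec{\lambda})/2$, not $S(\vec{\lambda})$. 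Your chain therefore delivers $1-|\braket{1',\ldots,N'}{\Psi}|^{2}\geq S(\vec{\lambda})/(2N)$, which is weaker by a factor of $2$ than the stated inequality. This factor cannot be recovered by sharpening your estimates: for $(N,d)=(2,4)$ the state $\ket{\Psi}=\tfrac{1}{\sqrt{2}}\big(\ket{1',2'}+\ket{3',4'}\big)$ has $\rho_{1}=\tfrac{1}{2}\mathbb{1}$, so $S(\vec{\lambda})/N=1$ while $1-|\braket{1',2'}{\Psi}|^{2}=\tfrac{1}{2}$; your bound $S/(2N)=\tfrac{1}{2}$ is saturated there, and the lemma as stated is violated.

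So the defect lies in the lemma (and in the paper's own proof), not in your method. The paper computes $S(\vec{\lambda}')=\bra{\Psi}\hat{S}'\ket{\Psi}$ with $\hat{S}'=\sum_{i\leq N}(\mathbb{1}-\hat{n}'_{i})+\sum_{j>N}\hat{n}'_{j}$ and bounds its eigenvalues by $N$; but on an $N$-particle determinant $\ket{I'}$ the eigenvalue equals $2(N-|I\cap I_{0}|)$ and reaches $2\min(N,d-N)$, so that argument also yields only $S(\vec{\lambda})/(2N)\leq 1-|\braket{1',\ldots,N'}{\Psi}|^{2}$. Your derivation is, in effect, the correctly handled ``hole'' half of the paper's operator computation (your combinatorial identity is the statement $\sum_{i\leq N}\langle\mathbb{1}-\hat{n}'_{i}\rangle_{\Psi}\leq N\,(1-|c_{I_{0}}|^{2})$, where the eigenvalue bound $N$ is legitimate) combined with the same majorization fact the paper invokes as ``$\vec{\lambda}$ majorizes $\vec{\lambda}'$''. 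The fix is to state and prove the corrected inequality with $2N$ in the denominator; downstream this only changes the constant in (\ref{eq:estimate2}) to $K=4N(E_{ex}^{(+)}-E_{0})/(E_{ex}^{(-)}-E_{0})$.
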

\begin{proof}
The proof of Lemma \ref{lem:HF} is elementary. First, we consider the operator,
\begin{equation}
\hat{S}'\equiv \sum_{i=1}^N (\mathbb{1}-\hat{n}'_i)+ \sum_{j=N+1}^d \hat{n}'_j\,,
\end{equation}
where $\hat{n}'_i$ denotes the particle number operator for the state $\ket{i'}$ (acting on the $N$-fermion Hilbert space) and we define the occupancies $\lambda'_i\equiv \bra{\Psi}\hat{n}'_i\ket{\Psi}$. Then, by using the spectral decomposition $\hat{S}'=\sum_{s=0}^N s \hat{P}'_s$ we find
\begin{eqnarray}\label{eq:estimateHF}
S(\vec{\lambda}')&\equiv& \bra{\Psi}\hat{S}'\ket{\Psi} \nonumber \\
&=& \sum_{s=1}^N s\,\|\hat{P}'_s\Psi\|^2 \nonumber \\
&\leq& \sum_{s=1}^N N \,\|\hat{P}'_s\Psi\|^2 \nonumber \\
&=& N \,\big(1-\|\hat{P}'_0\Psi\|^2\big)\nonumber \\
&\leq& N \,\Big(1-\big|\bra{1',2',\ldots,N'} \Psi\rangle\big|^2\Big)\,.
\end{eqnarray}
Since $\vec{\lambda}$, the vector of non-increasingly ordered eigenvalues of $\rho_1$, majorizes any vector of occupation numbers, as, e.g., $\vec{\lambda}'$, we find
$S(\vec{\lambda})\leq S(\vec{\lambda}')$, which together with estimate (\ref{eq:estimateHF}) completes the proof of Lemma \ref{lem:HF}.
\end{proof}

By choosing in Lemma \ref{lem:HF} the Hartree-Fock molecular orbitals $\{\ket{\chi_j}\}_{j=1}^d$ as reference basis $\{\ket{i'}\}_{j=1}^d$ we obtain
\begin{equation}
\frac{S(\vec{\lambda}_0)}{N} \leq 1-\big|\bra{\chi_1,\ldots,\ldots,\chi_N} \Psi_0\rangle\big|^2 \equiv 1-\big|\bra{\Psi_{HF}} \Psi_0\rangle\big|^2 \,.
\end{equation}
This together with Eq.~(\ref{eq:estES2}) yields
\begin{equation}\label{eq:17HF}
E_{corr} \geq    \frac{E_{ex}^{(-)}-E_0}{N}\,S(\vec{\lambda}_0)\,,
\end{equation}
Estimate (\ref{eq:17HF}) in combination with Eq.~(\ref{eq:estimate1}) leads to the final result, Eq.~(\ref{eq:estimate2}),
\begin{equation}
\frac{\Delta E_D}{E_{corr}}
 \leq K\, \frac{D(\vec{\lambda}_0)}{S(\vec{\lambda}_0)}\,,
\end{equation}
where $E_{corr}\equiv E_{HF}-E_0$ is the correlation energy and $K\equiv 2 N (E_{ex}^{(+)}-E_0)/(E_{ex}^{(-)}-E_0)$.
\end{document}